\documentclass[letterpaper, 10pt]{article}      
\usepackage{graphicx} 
\usepackage{times} 
\usepackage{amsmath} 
\usepackage{amssymb,amsthm}  
\usepackage[colorlinks=true]{hyperref}

\newtheorem{theorem}{Theorem}
\newtheorem{lemma}{Lemma}
\newtheorem{remark}{Remark}

\binoppenalty=\maxdimen
\relpenalty=\maxdimen

\newcommand{\ba}{\mathbf{a}}

\newcommand{\bA}{\mathbf{A}}
\newcommand{\bB}{\mathbf{B}}

\newcommand{\bE}{\mathbf{E}}
\newcommand{\bF}{\mathbf{F}}

\newcommand{\bI}{\mathbf{I}}
\newcommand{\bJ}{\mathbf{J}}

\newcommand{\bP}{\mathbf{P}}
\newcommand{\bQ}{\mathbf{Q}}
\newcommand{\bR}{\mathbf{R}}
\newcommand{\bS}{\mathbf{S}}

\newcommand{\bV}{\mathbf{V}}
\newcommand{\bW}{\mathbf{W}}

\newcommand{\cD}{\mathcal{D}}

\newcommand{\ket}[1]{\left|#1\right\rangle}
\newcommand{\bra}[1]{\left\langle #1\right|}
\newcommand{\bket}[1]{\left\langle #1 \right\rangle}

\newcommand{\dotex}{\frac{d}{dt}}

\newcommand{\Tr}[1]{\rm{Tr}\left(#1\right)}

\title{\LARGE \bf
Exponential convergence  of a dissipative quantum system towards finite-energy  grid states of an oscillator
}

\author{Lev-Arcady Sellem\thanks{Laboratoire  de Physique de l’Ecole normale sup\'{e}rieure, Mines Paris-PSL, Inria,  ENS-PSL, Universit\'{e} PSL, CNRS, Sorbonne Universit\'{e},  Paris, France.}
 \and Philippe Campagne-Ibarcq$^*$ \and Mazyar Mirrahimi$^*$ \and Alain Sarlette$^*$ \and Pierre Rouchon$^*$ \thanks{Corresponding author: pierre.rouchon@minesparis.psl.eu }%
}
%

\begin{document}

\maketitle
\thispagestyle{empty}
\pagestyle{empty}

\begin{abstract}
Based on the stabilizer formalism underlying Quantum Error Correction (QEC),  the design of an original  Lindblad master equation for the density operator of a quantum harmonic oscillator is proposed. This   Lindblad dynamics stabilizes exactly the  finite-energy grid states introduced in 2001 by  Gottesman, Kitaev and Preskill  for  quantum computation.  Stabilization  results from  an exponential Lyapunov function  with an explicit lower-bound on the convergence rate. Numerical simulations indicate the potential interest of such autonomous QEC in presence of non-negligible  photon-losses.

\end{abstract}

\section{Introduction}
Quantum Error Correction (QEC) represents a much sought-after target in the road towards large-scale quantum computations.
Indeed, decoherence affecting early quantum computing platforms
limits their ability to carry out interesting computations.
However, the \emph{threshold theorem}~\cite{nielsen-chang-book} states
that the use of quantum error correcting codes
could allow for arbitrarily long reliable quantum computations,
provided the noise levels affecting the hardware
could be kept below a threshold depending on the considered code.
A major issue for QEC is the huge resource overhead associated with the use of error correcting
codes~\cite{FowlerMariantoniMartinisEtAl2012}
and recent years have seen a growing number of encoding proposals aim at reducing this overhead,
such as the so-called
cat code~\cite{MirrahimiCatComp2014,GuillaudMirrahimiPRX2019},
binomial code~\cite{MichaelSilveriBrierleyEtAl2016}
or GKP code~\cite{GKP-PRA2001}.
In particular, recent experiments in
superconducting circuits~\cite{CampagneIbarcq2020}
and trapped ions~\cite{NeeveGKP2022}
demonstrated the generation
and stabilization
of the finite-energy grid states
underlying the GKP encoding,
sparking a renewed interest for its use for quantum computation (see e.g.\  recent reviews~\cite{TerhalQST2020,GrimsmoPuriGKP2021}).
From a control theoretical perspective, QEC  is a feedback-loop. Usual QEC  is a discrete-time process   based on a static output-feedback where the measured error syndrome (a classical output signal)  indicates which   correcting unitary transformation has to be applied via a specific short time-pulse on the classical control-input signal; in that case the controller is a classical system.
On the other hand, autonomous QEC or reservoir engineering QEC  is  a  continuous-time process where  the controller is  a  dissipative quantum system coupled to the system storing quantum information.
The idea of exploiting quantum dissipation
goes back to optical pumping~\cite{Kastl1967S}.
In~\cite{VerstraeteWolfIgnacioCirac2009}  the potential interest of such dissipation engineering  is  highlighted  for quantum state preparation and  computation.

Continuous-time stabilization through dissipation engineering
has been experimentally demonstrated for cat codes
(see e.g.\  ~\cite{KirchVLNPGMFGS2013N,LeghtTPKVPSNSHRFSMD2015S,LescanneZaki2019})
and theoretically contemplated for GKP states in~\cite{GirvinPRL20}
where numerical simulations based on  a Lindblad master equation with two   dissipation operators indicate the potential interest of this approach;
however,
the authors did not investigate convergence rates of the proposed dynamics
or energy boundedness along trajectories.
Here  we go further and  propose a set of four dissipation Lindblad operators
exponentially
stabilizing
finite-energy GKP states introduced in~\cite{GKP-PRA2001}.
In  section~\ref{sec:design}, we develop, for a square lattice,
a heuristic   method to design    Lindblad  dynamics
stabilizing  finite-energy GKP states.
Its adaptation to other  lattices such as the hexagonal one is straightforward.
It is inspired by the stabilizer formalism  widely used in quantum error correction (see e.g.\  ~\cite[chapter 10]{nielsen-chang-book}).
Theorem~\ref{thm:Lyapunov} of section~\ref{sec:expo} provides
an  exponential Lyapunov function
for the proposed Lindblad dynamics~\eqref{eq:Lindblad}
with an explicit lower bound on the  convergence rate.
Section~\ref{sec:simu},
devoted to numerical simulations with photon-loss error,
indicates the  interest of  these Lindblad dissipators for autonomous QEC. Possible further developments and issues are gathered in  section~\ref{sec:conclusion}. The detailed and technical  calculations   are in appendix.

\section{Lindblad dissipators derived from infinite-energy stabilizer generators } \label{sec:design}

Set $\eta= 2 \sqrt{\pi}$ and consider the Hermitian phase-space  operators of a quantum harmonic oscillator  $\bQ$ and $\bP$ satisfying  $[\bQ,\bP]=i$.
By Glauber identity $e^{\pm i \eta \bQ}$ and $e^{\pm i \eta \bP}$ commute.
The four commuting operators
$e^{i \eta \bQ}$, $e^{-i \eta \bQ}$, $e^{i \eta \bP}$ and $e^{-i \eta \bP}$
are called the infinite-energy GKP stabilizers
and their common eigenspace associated to the eigenvalue $+1$
is called
the infinite-energy  GKP codespace.
These four stabilizer operators are the independent generators
of the stabilizer group $\{e^{i n \eta \bQ} e^{i m \eta \bP}~|~(n,m)\in\mathbb{Z}^2\}$.

 In the $q$-representation,
 $\bP\equiv - i \frac{d}{dq}$
 hence  $e^{\pm i\eta\bP} \equiv e^{\pm \eta \frac{d}{dq}}$
 corresponds to a constant shift of $\pm \eta$ on $q$.  Thus, $e^{i n \eta \bQ} e^{i m \eta \bP}$ applied on the wave function $\ket{\psi}\equiv (\psi(q))_{q\in\mathbb{R}}$ reads
\begin{equation}
	\label{eq:qrep}
e^{i n \eta \bQ} e^{i m \eta \bP} \ket{\psi} \equiv \left(e^{i n \eta q} \psi(q+m\eta)\right)_{q\in\mathbb{R}}
.
\end{equation}
Solving for the $+1$-eigenstates of~\eqref{eq:qrep},
we find that the infinite-energy GKP codespace is of dimension $2$ and spanned by two Dirac combs, the even comb
$\sum_{k\in\mathbb{Z}} \delta\big(q-2k\tfrac{2\pi}\eta\big)$
located at even multiples of $\tfrac{2\pi}\eta=\sqrt\pi$
and the odd comb
$\sum_{k\in\mathbb{Z}} \delta\big(q-(2k+1)\tfrac{2\pi}\eta\big)$
located at odd multiples of $\tfrac{2\pi}\eta$
($\delta$ stands for the Dirac distribution).

As  in~\cite[section V]{GKP-PRA2001} (see also \cite{GrimsmoPuriGKP2021} for a recent exposure), consider $\bE_\varepsilon=e^{- \frac{\varepsilon}{2} (\bQ^2+\bP^2)}$, a regularizing Hermitian operator  with  $0 <\varepsilon\ll 1$.  In the $q$-representation $\bE_\varepsilon$  corresponds to the  convolution with the Mehler  kernel
\[
K(q,q',\varepsilon)= \tfrac{\exp\left(- \frac{\tanh\varepsilon}{2} (q')^2\right)}{\sqrt{2\pi\sinh\varepsilon}} \exp\left(- \tfrac{(q-q'/\cosh\varepsilon)^2}{2\tanh\varepsilon}\right)
.
\]
Thus, $\bE_\varepsilon \ket{\psi}$ reads  $ \int_{\mathbb{R}} K(q,q',\varepsilon ) \psi(q') \, dq'  $ and applied to the even and odd  combs yields
the following  coherent  superpositions  of Gaussian squeezed states of finite-energy (average photon-number around  $1/(2\varepsilon)$):
\begin{align*}
	\ket{\text{even}_\varepsilon}&\equiv  \left(\sum_k   \tfrac{e^{-  \frac{\pi \tanh\varepsilon}{2}(2k)^2}}{\sqrt{2\pi\sinh\varepsilon}} e^{- \frac{(q-2k\sqrt{\pi}/\cosh\varepsilon)^2}{2\tanh\varepsilon}} \right)_{q\in\mathbb{R}}\\
	\ket{\text{odd}_\varepsilon}&\equiv \left(\sum_k   \tfrac{e^{- \frac{\pi \tanh\varepsilon}{2}(2k+1)^2}}{\sqrt{2\pi\sinh\varepsilon}} e^{- \frac{(q-(2k+1)\sqrt{\pi}/\cosh\varepsilon)^2}{2\tanh\varepsilon}}\right)_{q\in\mathbb{R}}.
\end{align*}
With  $0<\varepsilon \ll 1$ these two finite-energy and smooth  quantum states
approximate generators of the  infinite-energy  GKP codespace.
We introduce an orthonormal basis of their span, defined by
$
\ket{0_\varepsilon} \propto  \ket{\text{even}_\varepsilon}$ and $\ket{1_\varepsilon}\propto \ket{\text{odd}_\varepsilon} -  \tfrac{\bket{\text{even}_\varepsilon|\text{odd}_\varepsilon}}{\bket{\text{even}_\varepsilon|\text{even}_\varepsilon}} \ket{\text{even}_\varepsilon}
.
$
By construction, $\ket{0_\varepsilon}$ and $\ket{1_\varepsilon}$  belong to  the kernel of the following four non-Hermitian operators derived from the infinite-energy stabilizer operators:
\begin{multline*}
         \bV_1 = \bE_\varepsilon \, e^{i \eta \bQ} \, \bE_\varepsilon^{-1}- \bI,~
         \bV_2 = \bE_\varepsilon \, e^{-i \eta \bQ} \, \bE_\varepsilon^{-1}- \bI,
         \\
         \bV_3 = \bE_\varepsilon \, e^{i \eta \bP} \, \bE_\varepsilon^{-1}- \bI,~
         \bV_4 = \bE_\varepsilon \, e^{-i \eta \bP} \, \bE_\varepsilon^{-1}- \bI.
 \end{multline*}
Using \begin{align*}
 & \bE_{\varepsilon} \, \bQ \, \bE_{\varepsilon}^{-1}  = \cosh(\varepsilon) \bQ + i \sinh(\varepsilon) \bP \triangleq \bR
\\
&\bE_{\varepsilon} \, \bP \, \bE_{\varepsilon}^{-1}  = - i \sinh(\varepsilon) \bQ +  \cosh(\varepsilon) \bP \triangleq \bS
\end{align*}
these four operators read
\begin{multline}\label{eq:Vk}
         \bV_1 = e^{i\eta \bR}-\bI,~
         \bV_2 = e^{i\eta \bS}-\bI,~
         \bV_3 = e^{-i\eta \bR}-\bI,~
         \bV_4 = e^{-i\eta \bS}-\bI .
 \end{multline}
Since  $[\bR,\bS]=i$ and $\eta^2=4\pi$,  for any $k,\ell$, $\bV_k$ and $\bV_\ell$ commute.
Then any density operator $\rho$ having his support in $\text{span}\{ \ket{0_\varepsilon},\ket{1_\varepsilon}\}$  is a steady state of the following Lindblad master equation:
\begin{equation}\label{eq:Lindblad}
  \dotex \rho =\sum_{k=1}^{4} \cD_{\bV_k}(\rho)\triangleq \mathcal{L}_\varepsilon(\rho)
\end{equation}
where
$
\cD_{\bV}(\rho) \triangleq \bV\rho \bV^\dag  - (\bV^\dag \bV \rho + \rho \bV^\dag\bV)/2
.
$
Next section  provides a first formal analysis ensuring the exponential  convergence of the above dynamical system towards the finite-energy GKP codespace, i.e.\   towards the set of  density operators $\rho$  with range  in  $\text{span}\{ \ket{0_\varepsilon},\ket{1_\varepsilon}\}$.

\section{Exponential convergence} \label{sec:expo}

The rigorous functional analysis framework  is not addressed here:
calculations are led as if the dimension of the underlying Hilbert space were finite.
The \emph{a priori} estimate we obtain
constitutes a first step towards
a fully rigorous mathematical analysis
that we plan to develop in future publications.

\begin{theorem}\label{thm:Lyapunov}
Consider  $
  \bW= \sum_{k=1}^{4} \bV_k^\dag \bV_k$ where the $\bV_k$ are given by~\eqref{eq:Vk}.
Then for any  time-varying density operator $\rho(t)$ satisfying~\eqref{eq:Lindblad}, we have,   for $\eta=2\sqrt{\pi}$ and  $\varepsilon\in(0,\frac{1}{2\eta}]$:
	\[
\dotex \Tr{\bW\rho(t)} \leq - \kappa(\varepsilon,\eta)  \Tr{\bW\rho(t)}
\]
with  $\kappa(\varepsilon,\eta) >0$ given by
\begin{multline}\label{eq:kappa}
  \kappa(\varepsilon,\eta)=\big(\sinh(\eta^2 s)-\sin(\eta^2c)\big) \big(1- e^{-3\eta^2 s/2}\big)
  \\
  -\big(\cosh(\eta^2 s)-\cos(\eta^2c)\big)  \big(1+ e^{-3\eta^2 s/2}\big)
\end{multline}
where $s=\sinh(2\varepsilon)$ and $c=\cosh(2\varepsilon)$.
For $0<\varepsilon\ll 1$ and $\eta=2\sqrt{\pi}$,  we have $\kappa(\varepsilon,\eta) =2 \eta^4 \varepsilon^2 + O(\varepsilon^3)$.
\end{theorem}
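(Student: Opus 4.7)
The plan is to establish the operator inequality $\mathcal{L}_\varepsilon^*(\bW) \leq -\kappa(\varepsilon,\eta)\,\bW$, after which the theorem follows by taking trace against $\rho(t)\geq 0$: indeed $\tfrac{d}{dt}\Tr{\bW\rho(t)}=\Tr{\mathcal{L}_\varepsilon^*(\bW)\,\rho(t)}$, where $\mathcal{L}_\varepsilon^*$ is the Heisenberg-picture adjoint.

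First I would simplify $\mathcal{L}_\varepsilon^*(\bW)$ by exploiting the commutation $[\bV_k,\bV_\ell]=0$. Expanding $\sum_{k,\ell}\bigl(\bV_k^\dagger \bV_\ell^\dagger \bV_\ell \bV_k -\tfrac12\{\bV_k^\dagger \bV_k,\bV_\ell^\dagger \bV_\ell\}\bigr)$, commuting $\bV_\ell$ first past $\bV_k$ and then past $\bV_\ell^\dagger$, the antisymmetric-in-$(k,\ell)$ anticommutator contributions cancel under summation, leaving
\[
\mathcal{L}_\varepsilon^*(\bW) \;=\; -\sum_{k,\ell=1}^4 \bV_k^\dagger\, [\bV_k,\bV_\ell^\dagger]\, \bV_\ell .
\]
Next I would evaluate each $[\bV_k,\bV_\ell^\dagger]=[\bD_k,\bD_\ell^\dagger]$ with $\bD_k=e^{\alpha_k}$, $\alpha_k\in\{\pm i\eta\bR,\pm i\eta\bS\}$. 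Direct computation gives $[\bR,\bR^\dagger]=[\bS,\bS^\dagger]=\sinh(2\varepsilon)=s$ and $[\bR,\bS^\dagger]=i\cosh(2\varepsilon)=ic$, so each $[\alpha_k,\alpha_\ell^\dagger]$ is scalar and BCH yields $[\bV_k,\bV_\ell^\dagger]=(e^{\gamma_{k\ell}}-1)\,\bD_\ell^\dagger\bD_k$, where the sixteen $\gamma_{k\ell}$ take only four values: $\pm\eta^2 s$ (within-family $\bR$–$\bR$ or $\bS$–$\bS$ pairs) and $\pm i\eta^2 c$ (cross-family pairs). Pairing $(k,\ell)$ with $(\ell,k)$ and using Hermiticity ($\gamma_{\ell k}=\overline{\gamma_{k\ell}}$), the within-family pairs combine to real factors $\sinh(\eta^2 s)$ and $\cosh(\eta^2 s)-1$, while the cross-family pairs combine to $\cos(\eta^2 c)-1$ and $\sin(\eta^2 c)$. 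This is precisely the source of the four trigonometric/hyperbolic building blocks of $\kappa$ in~\eqref{eq:kappa}.

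The decisive and hardest step is to turn this identity into the operator inequality $\mathcal{L}_\varepsilon^*(\bW)\leq -\kappa\,\bW$, i.e.\ to bound the cross terms $\bV_k^\dagger \bD_\ell^\dagger \bD_k \bV_\ell$ by diagonal $\bV_k^\dagger \bV_k$. I would proceed block by block, using the further BCH identity $\bD_k^\dagger \bD_k = e^{-\eta^2 s/2}\,e^{\pm 2\eta\sinh(\varepsilon)\bP}$ (or $\bQ$) to control the non-unitary displacements, combined with a weighted Cauchy–Schwarz $\bA^\dagger \bB + \bB^\dagger \bA \leq \lambda\,\bA^\dagger \bA + \lambda^{-1}\bB^\dagger \bB$ at a weight $\lambda$ of order $e^{-\eta^2 s}$. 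This weighted splitting is what I believe generates the $e^{-3\eta^2 s/2}$ factors appearing in~\eqref{eq:kappa}, and it is also where the restriction $\varepsilon\in(0,\tfrac1{2\eta}]$ is used, to keep the resulting coefficient $\kappa$ positive and to dominate error terms uniformly.

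Once $\mathcal{L}_\varepsilon^*(\bW)\leq -\kappa\,\bW$ is in hand, Grönwall applied to $\Tr{\bW\rho(t)}$ gives the exponential decay. Finally, the asymptotic $\kappa(\varepsilon,\eta)=2\eta^4\varepsilon^2+O(\varepsilon^3)$ is a routine Taylor expansion of~\eqref{eq:kappa} at $\varepsilon=0$, crucially exploiting $\eta^2=4\pi$ so that $\sin(\eta^2)=0$ and $\cos(\eta^2)=1$: this makes $\sinh(\eta^2 s)-\sin(\eta^2 c)$ start at order $\varepsilon$ and $\cosh(\eta^2 s)-\cos(\eta^2 c)$ start at order $\varepsilon^2$, so that combining the two terms of~\eqref{eq:kappa} (with $(1\mp e^{-3\eta^2 s/2})\to 3\eta^2\varepsilon$ and $2$ respectively) yields $6\eta^4\varepsilon^2-4\eta^4\varepsilon^2=2\eta^4\varepsilon^2$.
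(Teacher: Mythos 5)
Your opening reduction (the identity $\dotex \bW=\sum_{k,\ell}\bV_k^\dag[\bV_\ell^\dag,\bV_k]\bV_\ell$ obtained from the mutual commutation of the $\bV_k$, the evaluation of the commutators via Glauber/BCH using the scalar commutators $[\bR,\bR^\dag]=[\bS,\bS^\dag]=s\,\bI$ and $[\bR,\bS^\dag]=ic\,\bI$) and your closing Taylor expansion of $\kappa$ both match the paper and are correct. The gap is precisely in what you yourself flag as the decisive step. A term-by-term weighted Cauchy--Schwarz $\bA^\dag\bB+\bB^\dag\bA\le\lambda\bA^\dag\bA+\lambda^{-1}\bB^\dag\bB$ applied to the cross terms produces on the right-hand side operators of the form $\bV_k^\dag\, e^{\pm 2\eta\sinh(\varepsilon)\bP}\,\bV_k$ and $\bV_k^\dag\, e^{\pm 2\eta\sinh(\varepsilon)\bQ}\,\bV_k$ with \emph{positive} coefficients. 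These sandwiched exponentials are unbounded above, the exponents occur with both signs and in the two non-commuting quadratures, and they cannot be dominated by the negative diagonal contributions $(e^{-\eta^2 s}-1)\bV_k^\dag e^{\mp 2\eta\sinh(\varepsilon)\bP}\bV_k$; so the scheme as described does not yield anything of the form $-\kappa\,\bW$. The paper instead keeps the full quadratic form $\sum_{k,\ell}\bW_k^\dag T_{k,\ell}\bW_\ell$ with $\bW_k=e^{-i\eta\bR_k^\dag}\bV_k$ and a scalar circulant matrix $T$, diagonalizes $T$ exactly, and exploits the key structural fact that the \emph{negative} eigendirections carry the unbounded operators: after the factorization $\bV_k=e^{i\eta\bR_k/2}(e^{i\eta\bR_k/2}-e^{-i\eta\bR_k/2})$ one gets $\Lambda_\pm^\dag\Lambda_\pm\propto \cosh(3\eta\sinh(\varepsilon)\bP)\pm e^{-3\eta^2 s/4}\cos(\eta\cosh(\varepsilon)\bQ)$, so the only positive contribution is the \emph{bounded} cosine, which the operator-inequality lemma of the appendix absorbs into the unbounded negative $\cosh$ term. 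That "negative-unbounded dominates positive-bounded" structure is the opposite of what your splitting produces.

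A second missing ingredient: even granting an upper bound $\dotex\bW\le-\,c\,(\cdots)^\dag\cosh(3\eta\sinh(\varepsilon)\bP)(\cdots)$, you never say how to compare it with $\bW$ itself. The paper needs the further identity writing $\bW$ in exactly the same sandwiched form but with $\cosh(\eta\sinh(\varepsilon)\bP)$ (and $\cosh(\eta\sinh(\varepsilon)\bQ)$) in the middle, and then the elementary inequality $\cosh(3x)\ge\cosh(x)$ to conclude $\dotex\bW\le-\kappa\,\bW$; without it, a negative bound on $\dotex\bW$ gives decay of $\Tr{\bW\rho}$ but not the stated rate relative to $\Tr{\bW\rho}$. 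Finally, your account of the restriction $\varepsilon\in(0,\tfrac1{2\eta}]$ is slightly off: it is used to guarantee the sign ordering of the four eigenvalues of $T$ and the positivity of $\kappa$, not to dominate error terms in a perturbative sense.
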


The detailed proof is quite technical. It is  given in appendix~\ref{ap:proof} and relies on Glauber identity.
This theorem implies that,
for any initial density operator $\rho(0)$,
$0 \leq \Tr{\bW\rho(t)} \leq \Tr{\bW\rho(0)} e^{-  \kappa(\varepsilon,\eta) t}$.
Thus, $\lim_{t\mapsto +\infty} \Tr{\bW\rho(t)} =0$.
Since, for all $t \geq 0$,
$\rho(t) \geq 0$ and $\bW\geq 0$,
the support of $\rho(t)$ converges to
$\ker\bW=\text{span}\{\ket{0_\varepsilon},\ket{1_\varepsilon}\}$,
the finite-energy GKP codespace.
Since any operator with support in $\ker\bW$
belongs to $\ker\mathcal{L}_\varepsilon$,
$\rho(t)$ exponentially converges to a steady state of~\eqref{eq:Lindblad}.

Moreover, $\ker\mathcal{L}_\varepsilon$
coincides with  operators having their support  in $\ker\bW$.
Thus, $\ker\mathcal{L}_\varepsilon$ is of real dimension $4$,
spanned by density operators with support on $\text{span}\{\ket{0_\varepsilon},\ket{1_\varepsilon}\}$.
\begin{remark} \label{rmk:sqrt2pi}
  The a priori estimate of theorem~\ref{thm:Lyapunov} is also valid when $\eta=\sqrt{2\pi}$. Then  $\ker\bW$ is  spanned by
a single wave function corresponding to the regularization of the Dirac comb  $\sum_{k\in\mathbb{Z}} \delta\big(q-k\sqrt{2\pi}\big)$ and colinear to
	\[
 \sum_k   \tfrac{e^{- k^2 \pi \tanh\varepsilon}}{\sqrt{2\pi\sinh\varepsilon}} e^{- \frac{(q-k\sqrt{2\pi}/\cosh\varepsilon)^2}{2\tanh\varepsilon}}
.\]
Such  grid states   are certainly to be considered as interesting  resources in metrology  to measure simultaneously  the commuting modular observables derived from  $e^{\pm i\eta\bQ}$ and $e^{\pm i\eta\bP}$ and thus to avoid  the Heisenberg  uncertainty principle attached to measurements  of  $\bQ$ and $\bP$ (see~\cite[chapter V, section 4]{NeumannBook55}).
\end{remark}

Consider
\vspace{-2mm}
\begin{multline*}
  \bS_0 = \ket{0_\varepsilon} \bra{0_\varepsilon} + \ket{1_\varepsilon} \bra{1_\varepsilon}
, ~
\bS_x = \ket{1_\varepsilon} \bra{0_\varepsilon} + \ket{0_\varepsilon} \bra{1_\varepsilon}
,
\\
\bS_y= i \ket{1_\varepsilon} \bra{0_\varepsilon} - i \ket{0_\varepsilon} \bra{1_\varepsilon}
,~
\bS_z =  \ket{0_\varepsilon} \bra{0_\varepsilon} -  \ket{1_\varepsilon} \bra{1_\varepsilon}
.
\end{multline*}
Since $\ker\mathcal{L}_\varepsilon$ is of real dimension $4$, the kernel $\ker\mathcal{L}_\varepsilon^*$ of its adjoint $\mathcal{L}_\varepsilon^*$   for the Frobenius product is also  of dimension $4$. It is  spanned by four independent  Hermitian  invariant  operators,  $\bI$ (conservation of the trace) and
$$
\bJ_\xi =\lim_{t\mapsto +\infty} e^{t \mathcal{L}_\varepsilon^*}(\bS_\xi) ,~ \xi=x,y,z.
$$
Since the spectra  of $\bS_x$, $\bS_y $ and $\bS_z$ are  $\{-1,0,1\}$, the spectra  of
$\bJ_x$, $\bJ_y$ and $\bJ_z$  are inside   $[-1,1]$~\cite{SepulSR2010}.
Then  for any operator $\rho$,  we have (see e.g.\  \cite{AlberJ2014PRA}):
$$
\lim_{t\mapsto +\infty} e^{t \mathcal{L}_\varepsilon}(\rho) = \tfrac{\bS_0 + \Tr{\bJ_x \rho} \bS_x+ \Tr{\bJ_y \rho} \bS_y+ \Tr{\bJ_z \rho} \bS_z}{2}
.
$$
The quantities
$\Tr{\bJ_x \rho}, \, \Tr{\bJ_y \rho}, \, \Tr{\bJ_z \rho}$
can be seen as the Bloch coordinates
of a logical qubit encoded in the  density operator $\rho$,
as they always satisfy
\[(\Tr{\bJ_x \rho})^2+ (\Tr{\bJ_y \rho})^2+(\Tr{\bJ_z \rho})^2 \leq 1.\]

\section{Simulations with photon-loss errors} \label{sec:simu}

\begin{figure}[htbp]
  \centering
  \includegraphics[width=0.45\textwidth]{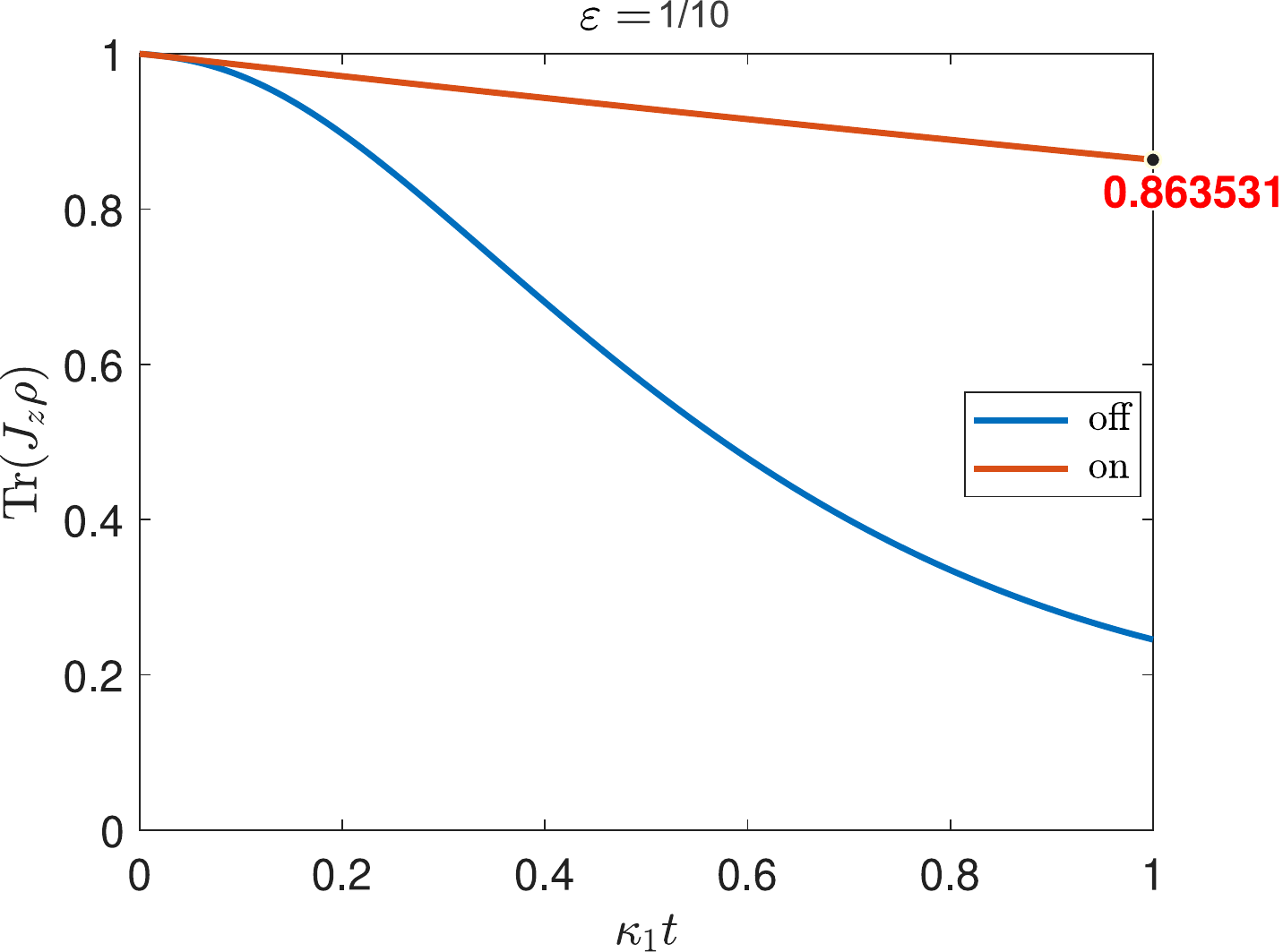} \\
	\vspace{3mm}
  \includegraphics[width=0.45\textwidth]{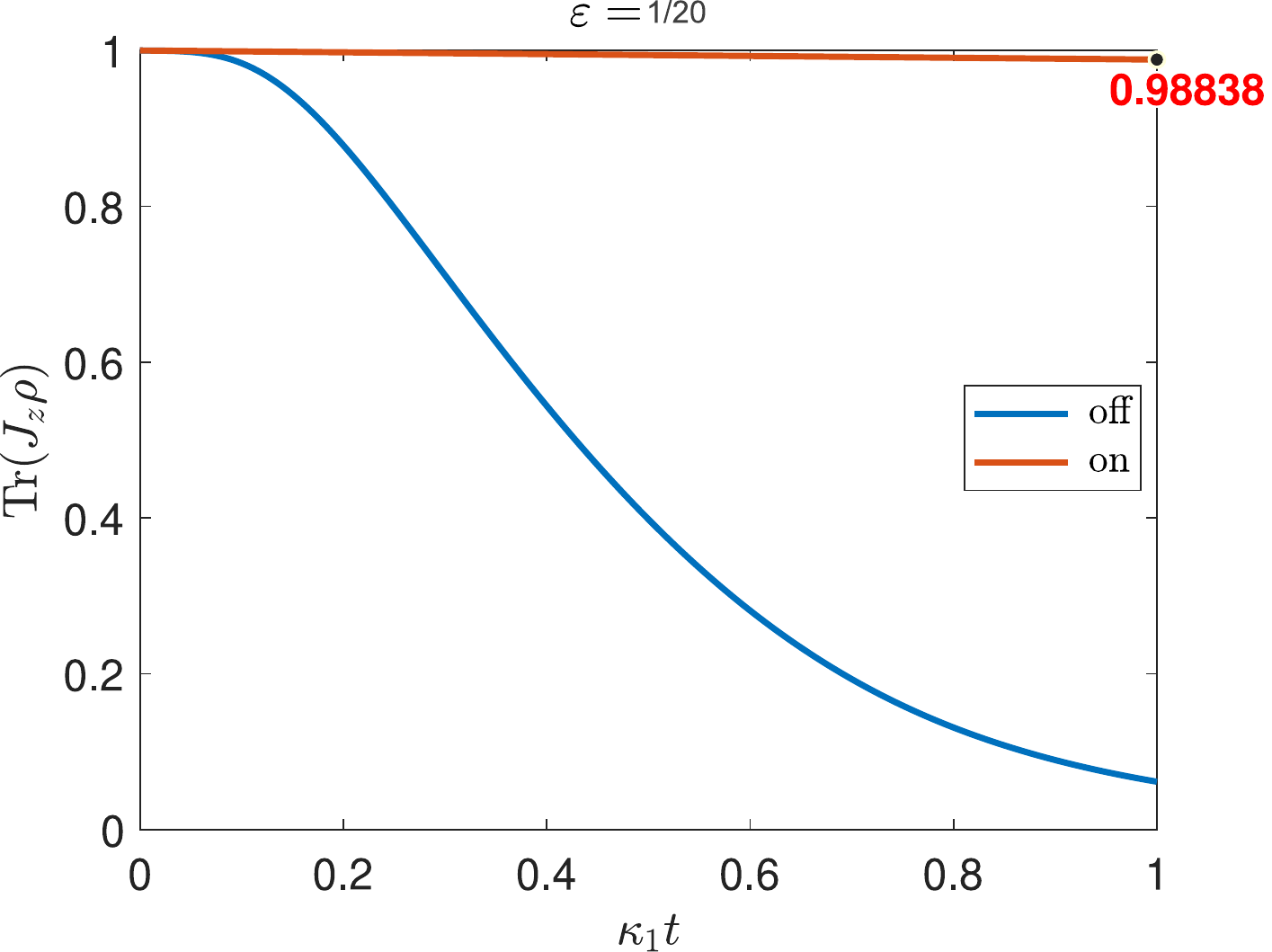} \\
	\vspace{3mm}
  \includegraphics[width=0.45\textwidth]{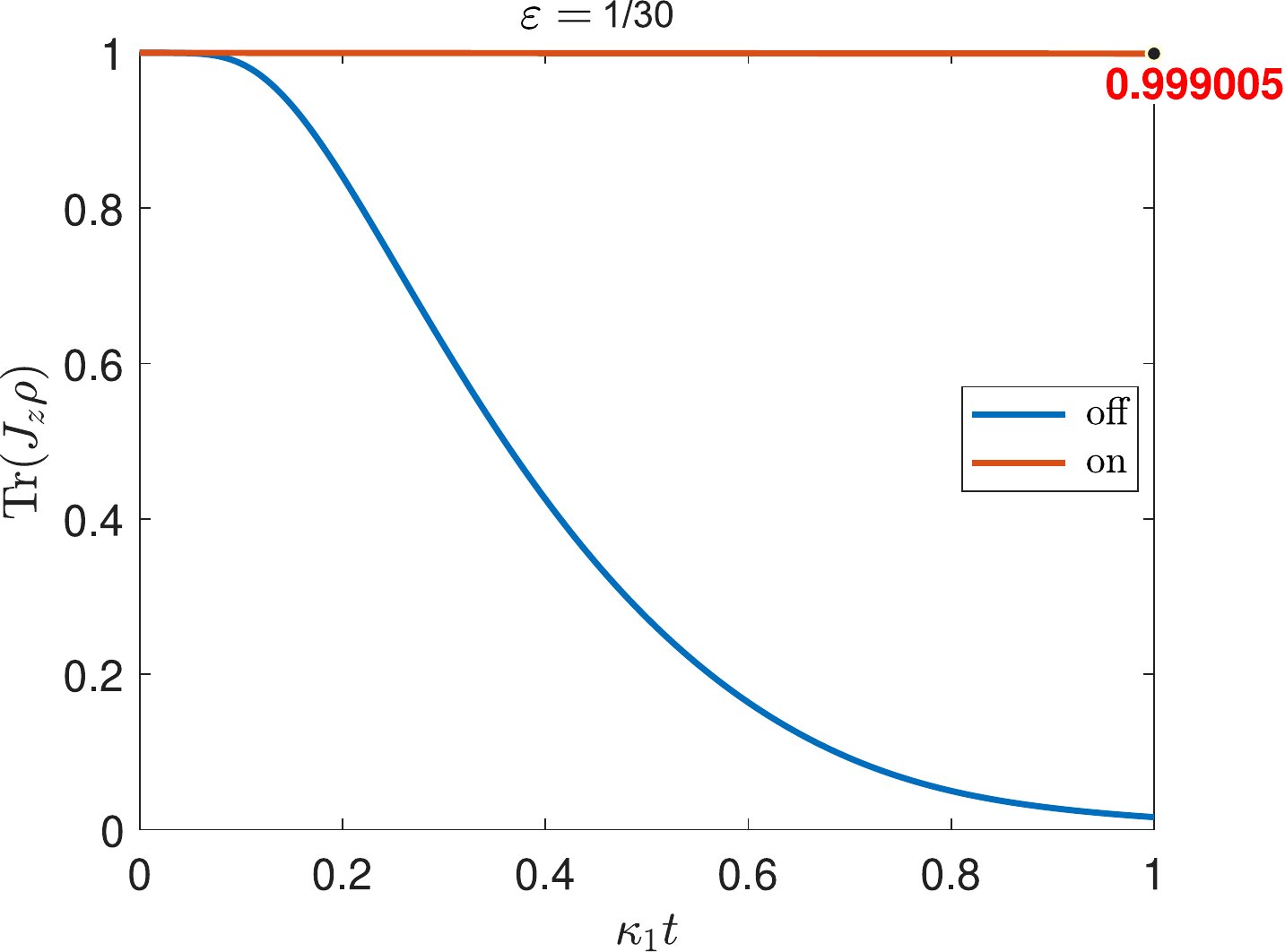}
  \caption{\rm \small Simulations between $t_0=0$ and $t_f=1/\kappa_1$, %
	starting from $\rho(0)=\ket{0_\varepsilon}\bra{0_\varepsilon}$ %
	(implying $\Tr{\bJ_z \rho(0)}=1$) %
	for different value of $\varepsilon$, %
	with (label "on") and without (label "off") %
	the autonomous error correction based on~\eqref{eq:Lindblad}, %
	including   a  non-negligible  error rate  of %
	$\kappa_1=\varepsilon/5$ associated to photon-losses %
	(single-photon life-time of $1/\kappa_1$). %
	The "on" error-rate is estimated as %
	$\kappa_1(1-\Tr{\bJ_z \rho(1/\kappa_1)})$. %
	When $\varepsilon$ decreases, %
	the corresponding decrease of the photon-loss rate $\kappa_1$ %
	is compensated for %
	by the choice of time horizon %
	$t_f = 1/\kappa_1$. %
	We observe that for $\varepsilon=1/10$ (resp. $1/20$ and $1/30$), %
	the "on" error-rate is approximately %
	$1/7$ %
	(resp. $1/80$ and $1/1000$) %
	of %
	the bare "off" error-rate $\kappa_1$.}
\label{fig:Jz}
\end{figure}

The  above formulae  are used in our simulations to compute numerically   $\bS_\xi$ and $\bJ_\xi$ just by  numerical time integration of~\eqref{eq:Lindblad} and of its adjoint.    A Galerkin approximation is used with   Fock subspace $\Big\{\ket{n}\Big\}_{0\leq n \leq n^*}$ where $\ket{n}$ is the state with $n$ photons~\cite{haroche-raimondBook06}.  Since the average number of photons on the finite-energy GKP codespace is around $1/(2\varepsilon)$, $n^*$ has to be much larger than $1/\varepsilon$. We have observed numerically  that  taking $n^*$ around $ 20 /\varepsilon$  is enough since  higher values  do not change the results. On figure~\ref{fig:Jz},  we have performed  simulations for $\varepsilon=1/10$, $1/20$ and $1/30$.
All simulations start   with
$\rho_0=\ket{0_\varepsilon}\bra{0_\varepsilon}$
on the finite-energy GKP codespace,
i.e.\  with logical coordinates
$\Tr{\bJ_x \rho_0}=\Tr{\bJ_y \rho_0}=0$
and
$\Tr{\bJ_z \rho_0}=1$.
All simulations include photon-loss errors at a rate $\kappa_1=\varepsilon/5$  scaled as 10\% of the inverse of the average number of photon in  $\ket{0_\varepsilon}$ and $\ket{1_\varepsilon}$.
The  Lindblad master equations numerically solved are of two kinds:
$$
\dotex \rho = \left\{
                \begin{array}{ll}
                  \mathcal{L}_\varepsilon (\rho) + \frac{\varepsilon}{5} \mathcal{D}_{\ba}(\rho), & \hbox{ curve label  ''on'';} \\
                  \frac{\varepsilon}{5} \mathcal{D}_{\ba}(\rho), & \hbox{curve label ''off'',}
                \end{array}
              \right.
$$
where $\ba=(\bQ+i\bP)/\sqrt{2}$ is the annihilation operator.
We observe a strong suppression of errors in presence of the engineered dissipation $\mathcal L_\varepsilon$.
Other simulations not presented here  with local phase-space operators  (polynomial of low-order in  $\bP$ and $\bQ$)  such as  $\mathcal{D}_{\ba^\dag}$,  $\mathcal{D}_{\bQ}$ and $\mathcal{D}_{\bP}$ instead of  $\mathcal{D}_{\ba}$,  exhibit a similar  strong decrease  of  the decoherence rate when $\varepsilon$ is decreased.

\section{Concluding remarks}\label{sec:conclusion}

The guarantee of exponential stability
provided by theorem~\ref{thm:Lyapunov},
combined with the
numerically observed
efficient protection
against local errors in phase space, motivates the following issue:
how to physically implement the  autonomous stabilization scheme
attached to the Lindblad master equation~\eqref{eq:Lindblad}?
Quantum superconducting circuits~\cite{CampagneIbarcq2020}
and trapped ions~\cite{NeeveGKP2022}
appear as promising platforms
for this task.

The strong impact of $\varepsilon$ close to $0^+$ on the decoherence rates  is an indication of some exponential behaviour in the protection  against local errors.   This point will be  investigated in
future works.

 Notice the analogy between  the Lyapunov  function $\bW$ in theorem~\ref{thm:Lyapunov}
and the Lyapunov function
$(\ba^r -\alpha^r\bI)^\dag (\ba^r -\alpha^r\bI)$
introduced in~\cite{AzouitSarletteRouchon2016}
for the Lindblad master equation corresponding to multi-photon pumping and cat-qubits~\cite{MirrahimiCatComp2014}:
$\dotex \rho = \mathcal{D}_{\ba^r -\alpha^r\bI}(\rho)$ with
$r\in \mathbb N^*$ and  $\alpha\in\mathbb{C}$.
As already done in~\cite{AzouitSarletteRouchon2016} for cat-qubits, we expect to provide in  forthcoming publications  a fully rigorous  and  functional analysis proof of well-posedness and exponential convergence   of the infinite-dimensional initial-value  problem~\eqref{eq:Lindblad}.

\paragraph{Acknowledgments. }
This project has received funding from the European Research Council (ERC) under the European Union’s Horizon 2020 research and innovation programme (grant agreement No. [884762]).

\appendix

\section{Proof of theorem~\ref{thm:Lyapunov}} \label{ap:proof}

Elementary numerical computations  combined with the asymptotics $2\eta^4 \varepsilon^2$ around $0^+$ ensure  that $\kappa >0$  when
$\eta=2\sqrt{\pi}$ and  $\varepsilon\in(0,\frac{1}{2\eta}]$.

Formally  $\dotex \Tr{\bW\rho(t)}= \Tr{\rho(t)\sum_{k=1}^{4} \cD^*_{\bV_k}(\bW)}$ where  the adjoints $\cD^*_{\bV_k}$ of the super operators $\cD_{\bV_k}$  are given by
\begin{align*}
	\cD^*_{\bV_k}(\bW) &\triangleq \bV_k^\dag \bW \bV_k  - (\bV_k^\dag \bV_k \bW + \bW \bV_k^\dag\bV_k)/2
\\
	&\equiv  (\bV_k^\dag [\bW,\bV_k] + [\bV_k^\dag,\bW] \bV_k)/2
.
\end{align*}
Since  for any $k$ and $\ell$, $[\bV_\ell,\bV_k]=0$, we have $[\bV^\dag_\ell\bV_\ell,\bV_k]=[\bV^\dag_\ell, \bV_k] \bV_\ell $. Thus
$$
\dotex \bW \triangleq \sum_{k=1}^{4} \cD^*_{\bV_k}(\bW) =  \sum_{k,\ell} \bV_k^\dag [\bV^\dag_\ell, \bV_k] \bV_\ell
.
$$
Let us introduce the notation
\[ \bR_1 = \bR, \,
	\bR_2 = \bS, \,
	\bR_3 = -\bR, \,
	\bR_4 = -\bS\]
such that
\[ \bV_k = e^{i\eta\bR_k} - \bI.\]
Exploiting Glauber identity, we get
$$
e^{i \eta \bA} e^{ i\eta \bB} = e^{-\tfrac{\eta^2}{2} [\bA,\bB]} \, e^{i\eta (\bA+\bB)}
$$
for operators $\bA,\bB$ such that $[\bA,[\bA,\bB]]=[\bB,[\bA,\bB]]=0$,
from which
 \[
	 \left[ e^{i\eta \bA}, \, e^{i\eta \bB} \right]
	 = e^{i\eta \bB} \,
		\left( e^{-\eta^2[\bA,\bB]}-\bI \right) \,
	   e^{i\eta \bA}.
 \]
With $\bA,\bB\in\{ \bR,\bS,\bR^\dag,\bS^\dag\}$ and commutations
\[
[\bR,\bR^\dag]=[\bS,\bS^\dag]= \sinh(2\varepsilon)\bI
\text{ and }
[\bR,\bS^\dag]= i \cosh(2\varepsilon) \bI,
\]
we get
\begin{align*}
	\frac d{dt} \bW
	&=\sum_{k,\ell} \bV_k^\dag [\bV_\ell^\dag,\bV_k]\bV_\ell \\
	&= \sum_{k,\ell} \bV_k^\dag \, [e^{-i\eta\bR_\ell^\dag}, \, e^{i\eta\bR_k}] \, \bV_\ell\\
	&= \sum_{k,\ell} \bV_k^\dag \,
		e^{i\eta\bR_k} \,
		\left( e^{\eta^2\left[\bR_\ell^\dag,\bR_k\right]}-\bI \right)\,
		e^{-i\eta\bR_\ell^\dag}
		\bV_\ell\\
	&= \sum_{k,\ell} \bW_k^\dag T_{k,\ell} \bW_\ell
\end{align*}
where
\[ \bW_k \triangleq e^{-i\eta\bR_k^\dag} \, \bV_k \]
and
\[ T_{k,\ell} \triangleq e^{\eta^2 [\bR_\ell^\dag, \bR_k]} - 1 \]
are scalar coefficients
forming the entries of the Hermitian circulant matrix:
{\small
$$
T=\begin{pmatrix}
     -1+ e^{-\eta^2s}  & -1+ e^{-i \eta^2c} &  -1+ e^{\eta^2s}   &  -1+ e^{i\eta^2c} \\
     -1+ e^{i \eta^2c} & -1+ e^{-\eta^2s}   & -1+ e^{-i \eta^2c} &   -1+ e^{\eta^2s} \\
     -1+ e^{\eta^2s}   & -1+ e^{i \eta^2c}   & -1+ e^{-\eta^2s}  & -1+ e^{-i \eta^2c} \\
     -1+ e^{-i\eta^2c} &   -1+ e^{\eta^2s}   & -1+ e^{i \eta^2c} & -1+ e^{-\eta^2s}   \\
  \end{pmatrix}
$$
}
with $s=\sinh(2\varepsilon)$ and $c=\cosh(2\varepsilon)$.
This matrix  admits the spectral decomposition
$T= \sum_k \lambda_k w_k^\dag w_k$
where
{\small
\begin{align*}
 & w_1 =\begin{pmatrix} \tfrac{1}{2} &  \tfrac{-1}{2} &  \tfrac{1}{2} & \tfrac{-1}{2} \end{pmatrix}
                                    \text{ with } \lambda_1=2(\cosh(\eta^2 s)-\cos(\eta^2c))   \\
 & w_2 =\begin{pmatrix} \tfrac{1}{2} &  \tfrac{1}{2} &  \tfrac{1}{2} & \tfrac{1}{2} \end{pmatrix}
                                    \text{ with } \lambda_2=2(\cosh(\eta^2 s)+\cos(\eta^2c)-2)   \\
 & w_3 =\begin{pmatrix} \tfrac{1}{2} &  \tfrac{-i}{2} &  \tfrac{-1}{2} & \tfrac{i}{2} \end{pmatrix}
                                    \text{ with } \lambda_3=-2(\sinh(\eta^2 s)-\sin(\eta^2c)) \\
 & w_4 =\begin{pmatrix} \tfrac{1}{2} &  \tfrac{i}{2} &  \tfrac{-1}{2} & \tfrac{-i}{2} \end{pmatrix}
                                    \text{ with } \lambda_4=-2(\sinh(\eta^2 s)+\sin(\eta^2c))                                 .
 \end{align*}
 }
Simple numerical  computations 
 show that for $\eta=2\sqrt{\pi}$ and  $\eta\varepsilon\in(0,1/2]$ one has
$$
\lambda_4 \leq \lambda_3  \leq 0 \leq \lambda_2 \leq \lambda_1
.
$$
With\begin{align*}
      \bF_1 & = \tfrac{1}{2} \left( \bW_1-\bW_2 + \bW_3 -\bW_4\right)  \\
      \bF_2 & = \tfrac{1}{2} \left( \bW_1+\bW_2 + \bW_3 +\bW_4\right)  \\
      \bF_3 & = \tfrac{1}{2} \left( \bW_1-i\,\bW_2 -\bW_3 +i\,\bW_4\right)  \\
      \bF_4 & = \tfrac{1}{2} \left( \bW_1+i\,\bW_2 -\bW_3 -i\, \bW_4\right)  \\
    \end{align*}
     we have
$$
  \dotex \bW = \sum_k \lambda_k \bF_k^\dag \bF_k
  \leq \lambda_1 (\bF_1^\dag \bF_1 + \bF_2^\dag \bF_2) + \lambda_3 (\bF_3^\dag \bF_3 + \bF_4^\dag \bF_4)
  .
$$
With
\begin{multline*}
 \bF_1^\dag \bF_1 + \bF_2^\dag \bF_2= \tfrac{1}{2} \Big( (\bW_1+\bW_3)^\dag  (\bW_1+\bW_3) + (\bW_2+\bW_4)^\dag  (\bW_2+\bW_4)  \Big)
\end{multline*}
and
\begin{multline*}
\bF_3^\dag \bF_3 + \bF_4^\dag \bF_4= \tfrac{1}{2} \Big( (\bW_1-\bW_3)^\dag  (\bW_1-\bW_3)
 + (\bW_2-\bW_4)^\dag  (\bW_2-\bW_4)  \Big)
\end{multline*}
one gets
{\small
\begin{align*}
	\dotex \bW \leq&
  \tfrac{\lambda_1}{2} (\bW_1+\bW_3)^\dag  (\bW_1+\bW_3)+ \tfrac{\lambda_3}{2}(\bW_1-\bW_3)^\dag  (\bW_1-\bW_3)
  \\
	+&
   \tfrac{\lambda_1}{2} (\bW_2+\bW_4)^\dag  (\bW_2+\bW_4)+ \tfrac{\lambda_3}{2}(\bW_2-\bW_4)^\dag  (\bW_2-\bW_4)
  .
\end{align*}
}
Writing $\bV_k$ as
\[
	\bV_k
	= e^{i\eta\bR_k}-\bI
	= e^{i\tfrac{\bR_k}2}\left( e^{i\eta\tfrac{\bR_k}2} - e^{-i\eta\tfrac{\bR_k}2}\right)
\]
we have
\begin{align*}
  \bW_1+\bW_3  = \left(e^{-i\eta\bR^\dag} e^{i\eta\tfrac{\bR}{2}} -  e^{i\eta\bR^\dag}  e^{-i\eta\tfrac{\bR}{2}} \right) (e^{i \eta\tfrac{\bR}{2}}-e^{-i \eta\tfrac{\bR}{2}})
  \\
    \bW_1-\bW_3  = \left(e^{-i\eta\bR^\dag} e^{i\eta\tfrac{\bR}{2}} + e^{i\eta\bR^\dag}  e^{-i\eta\tfrac{\bR}{2}} \right) (e^{i \eta\tfrac{\bR}{2}}-e^{-i \eta\tfrac{\bR}{2}})
.
\end{align*}
Setting $\Lambda_\pm =e^{-i\eta\bR^\dag} e^{i\eta\tfrac{\bR}{2}} \pm  e^{i\eta\bR^\dag}  e^{-i\eta\tfrac{\bR}{2}}$,
usual computations based on Glauber identity yield ($s= \sinh(2\varepsilon)$)
\[
	\Lambda_\pm^\dagger \Lambda_\pm =  2 e^{-\eta^2 s/8} \left(
  \cosh\big(3 \eta \sinh(\varepsilon) \bP \big) \pm e^{-3\eta^2s/4} \cos\big(\eta\cosh(\varepsilon) \bQ\big) \right).
\]
We thus have
\begin{multline*}
 \tfrac{\lambda_1}{2} (\bW_1+\bW_3)^\dag  (\bW_1+\bW_3)+ \tfrac{\lambda_3}{2}(\bW_1-\bW_3)^\dag  (\bW_1-\bW_3)
 \\ =
  e^{-\eta^2 s /8}  \big(e^{i \eta\tfrac{\bR}{2}}-e^{-i \eta\tfrac{\bR}{2}}\big)^\dag
  \Bigg(   (\lambda_1+\lambda_3) \cosh\big(3 \eta \sinh(\varepsilon) \bP \big)
   \\ +
 (-\lambda_1+\lambda_3)  e^{-3\eta^2 s /4} \cos\big(\eta\cosh(\varepsilon) \bQ\big)
   \Bigg)
   \big(e^{i \eta\tfrac{\bR}{2}}-e^{-i \eta\tfrac{\bR}{2}}\big)
   .
\end{multline*}
With $\lambda_3-\lambda_1 \leq 0$, $\varepsilon>0$  and lemma~\ref{lem:OperatorInequality}, we have
\begin{multline*}
(-\lambda_1+\lambda_3)  e^{-3\eta^2 s/4} \cos\big(\eta\cosh(\varepsilon) \bQ\big)
 \leq (\lambda_1-\lambda_3) e^{-3\eta^2 s/2}\cosh\big(3 \eta \sinh(\varepsilon) \bP \big)
.
\end{multline*}
Consequently
\begin{multline*}
 \tfrac{\lambda_1}{2} (\bW_1+\bW_3)^\dag  (\bW_1+\bW_3)+ \tfrac{\lambda_3}{2}(\bW_1-\bW_3)^\dag  (\bW_1-\bW_3)
 \\ \leq -2  e^{-\eta^2 s/8} \kappa(\eta,\varepsilon)
   \big(e^{i \eta\tfrac{\bR}{2}}-e^{-i \eta\tfrac{\bR}{2}}\big)^\dag
   \, \cosh\big(3 \eta \sinh(\varepsilon) \bP \big) \,
   \big(e^{i \eta\tfrac{\bR}{2}}-e^{-i \eta\tfrac{\bR}{2}}\big)
   .
\end{multline*}
with
\begin{align*}
	\kappa(\eta,\varepsilon)&= -\tfrac{1}{2} \lambda_3 (1- e^{-\frac{3\eta^2 s}{2}})
				-\tfrac12 \lambda_1  (1+ e^{-\frac{3\eta^2 s}{2}}\big)\\
				&=\big(\sinh(\eta^2 s)-\sin(\eta^2c)\big) \big(1- e^{-3\eta^2 s/2}\big)
		-\big(\cosh(\eta^2 s)-\cos(\eta^2c)\big)  \big(1+ e^{-3\eta^2 s/2}\big).
\end{align*}
Similarly,  we have
\begin{multline*}
 \tfrac{\lambda_1}{2} (\bW_2+\bW_4)^\dag  (\bW_2+\bW_4)+ \tfrac{\lambda_3}{2}(\bW_2-\bW_4)^\dag  (\bW_2-\bW_4)
 \\ \leq -2  e^{-\eta^2 s /8} \kappa(\eta,\varepsilon)
   \big(e^{i \eta\bS/2}-e^{-i \eta\bS/2}\big)^\dag
  \, \cosh\big(3 \eta \sinh(\varepsilon) \bQ \big) \,
   \big(e^{i \eta\bS/2}-e^{-i \eta\bS/2}\big)
   .
\end{multline*}
We have also
\begin{align*}
	\bW = 2&  e^{-\eta^2 s/8} %
  \big(e^{i \eta\tfrac{\bR}{2}}-e^{-i \eta\tfrac{\bR}{2}}\big)^\dag %
  \cosh\big( \eta \sinh(\varepsilon) \bP \big) %
   \big(e^{i \eta\tfrac{\bR}{2}}-e^{-i \eta\tfrac{\bR}{2}}\big) %
   \\
	 + &2  e^{-\eta^2 s/8} %
 \big(e^{i \eta\tfrac{\bS}{2}}-e^{-i \eta\tfrac{\bS}{2}}\big)^\dag %
  \cosh\big( \eta \sinh(\varepsilon) \bQ \big) %
   \big(e^{i \eta\tfrac{\bS}{2}}-e^{-i \eta\tfrac{\bS}{2}}\big) %
   . %
\end{align*}
Since   $\kappa(\eta,\varepsilon) >0$ we have
$
\dotex \bW \leq  -\kappa(\eta,\varepsilon)  \bW
$
using  $\cosh(3\sinh(\varepsilon)\bP) \geq \cosh(\sinh(\varepsilon)\bP)$ and $\cosh(3\sinh(\varepsilon)\bQ) \geq \cosh(\sinh(\varepsilon)\bQ)$.

\section{An operator inequality}
\begin{lemma}\label{lem:OperatorInequality}
  Take two operators Hermitian  $\bQ$ and $\bP$ such that $[\bQ,\bP]=i \bI$. Then
	\begin{equation*}
		 \forall \eta,\varepsilon \in\mathbb{R} \quad \quad
e^{-\tfrac{3\eta^2 |\sinh(2\varepsilon)|}{4}} \cosh\big(3 \eta \sinh(\varepsilon) \bP \big)
 \geq  \pm  \cos\big(\eta\cosh(\varepsilon) \bQ\big)
.
	\end{equation*}
\end{lemma}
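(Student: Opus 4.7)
The plan is to exhibit the claimed inequality as the positivity of $\bY^\dag\bY$ for a suitably chosen operator $\bY$. I would set $a := 3\eta\sinh(\varepsilon)$ and $b := \eta\cosh(\varepsilon)$, so that $|ab|/2 = 3\eta^2|\sinh(2\varepsilon)|/4$, and introduce $V := e^{a\bP}$ (self-adjoint, positive, unbounded) and $U := e^{ib\bQ}$ (unitary). The target then reads $e^{-|ab|/2}(V+V^{-1}) \pm (U+U^{-1}) \geq 0$.

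The key algebraic input is a fractional Weyl relation. From $[\bQ,\bP]=i\bI$ and Glauber's identity, $\bigl[\tfrac{ib\bQ}{2},a\bP\bigr] = -\tfrac{ab}{2}$ yields $U^{1/2}V = e^{-ab/2}VU^{1/2}$, whence $U^{1/2}VU^{-1/2} = e^{-ab/2}V$ and, analogously, $U^{-1/2}V^{-1}U^{1/2} = e^{-ab/2}V^{-1}$. I would then take, for $\epsilon\in\{\pm 1\}$, the operator $\bY_\epsilon := V^{1/2}U^{-1/2} + \epsilon V^{-1/2}U^{1/2}$. Expanding and using $V^{\pm 1/2}V^{\mp 1/2}=\bI$ gives $\bY_\epsilon^\dag\bY_\epsilon = U^{1/2}VU^{-1/2} + U^{-1/2}V^{-1}U^{1/2} + \epsilon(U+U^{-1})$, which by the Weyl identities above collapses to $e^{-ab/2}(V+V^{-1}) + \epsilon(U+U^{-1})$. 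Positivity therefore yields $e^{-ab/2}\cosh(a\bP) \geq \pm\cos(b\bQ)$, which is exactly the lemma when $ab\geq 0$.

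For the case $ab<0$ I would repeat the argument verbatim with the mirror choice $\bY_\epsilon := V^{1/2}U^{1/2} + \epsilon V^{-1/2}U^{-1/2}$; the mirror Weyl identities $U^{-1/2}VU^{1/2} = e^{+ab/2}V$ and $U^{1/2}V^{-1}U^{-1/2} = e^{+ab/2}V^{-1}$ (coming from $\bigl[-\tfrac{ib\bQ}{2},a\bP\bigr] = +\tfrac{ab}{2}$) convert the diagonal contribution to $e^{+ab/2}(V+V^{-1})$, and since $e^{+ab/2} = e^{-|ab|/2}$ in this regime the lemma again follows. The two variants can also be unified by letting $\sigma := \mathrm{sgn}(ab)$ and choosing $\bY_\epsilon := V^{1/2}U^{-\sigma/2} + \epsilon V^{-1/2}U^{\sigma/2}$.

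The main obstacle is analytic rather than algebraic: $V^{\pm 1/2}$ are unbounded and $\bY_\epsilon$ is only densely defined, so a fully rigorous treatment would require specifying a common core (e.g.\ Schwartz space in the Schr\"odinger representation) on which the Glauber/Weyl identities and the expansion of $\bY_\epsilon^\dag\bY_\epsilon$ simultaneously make sense. Consistent with the paper's declared convention of treating the Hilbert space as if it were finite-dimensional, I would carry out the whole computation formally.
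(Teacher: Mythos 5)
Your proof is correct, and it rests on the same underlying mechanism as the paper's: exhibit the operator inequality as positivity of $\bY^\dag\bY$ for a Weyl-type combination $\bY$, with Glauber's identity supplying the scalar phase/amplitude factors. The decomposition itself is genuinely different, though. The paper takes $\Lambda_\pm = e^{-i\eta\bR^\dag}e^{i\eta\bR/2}\pm e^{i\eta\bR^\dag}e^{-i\eta\bR/2}$ with $\bR=\cosh(\varepsilon)\bQ+i\sinh(\varepsilon)\bP$ --- factors that each mix $\bQ$ and $\bP$, chosen because exactly these combinations arise in the proof of Theorem~\ref{thm:Lyapunov} --- computes $\Lambda_\pm^\dag\Lambda_\pm = 2e^{-\eta^2 s/8}\big(\cosh(3\eta\sinh(\varepsilon)\bP)\pm e^{-3\eta^2 s/4}\cos(\eta\cosh(\varepsilon)\bQ)\big)$, and then recovers the stated $|\sinh(2\varepsilon)|$ by substituting $\varepsilon\mapsto-\varepsilon$ (the $\cosh$ and $\cos$ factors being even in $\varepsilon$). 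Your $\bY_\epsilon = V^{1/2}U^{\mp 1/2}+\epsilon V^{-1/2}U^{\pm 1/2}$ with $V=e^{a\bP}$, $U=e^{ib\bQ}$ cleanly separates the two Weyl factors, so the expansion reduces to the elementary relation $U^{1/2}VU^{-1/2}=e^{-ab/2}V$, and the sign of $ab$ is handled by the mirror ordering rather than by reflecting $\varepsilon$; this makes the lemma self-contained and makes transparent why the sharp prefactor is $e^{-|ab|/2}$ (for $ab>0$ your ordering even lands directly on the stronger inequality, where the paper's first pass only gives $e^{+3\eta^2 s/4}$ before the reflection). The paper's version has the advantage that the very same $\Lambda_\pm^\dag\Lambda_\pm$ computation is reused verbatim in the main estimate for $\frac{d}{dt}\bW$, so nothing is computed twice. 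You are also right, and appropriately explicit, that both arguments are formal: $V^{\pm1/2}$ is unbounded, and a rigorous version needs a common invariant core (e.g.\ Schwartz functions in the $q$-representation), which is exactly the caveat the paper itself announces at the start of Section~\ref{sec:expo}.
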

\begin{proof}
  Set $\bR= \cosh(\varepsilon) \bQ + i \sinh(\varepsilon)\bP$
and $\Lambda_\pm =e^{-i\eta\bR^\dag} e^{i\eta\tfrac{\bR}{2}} \pm  e^{i\eta\bR^\dag}  e^{-i\eta\tfrac{\bR}{2}}$,
	then usual computations based on Glauber identity yield
	\[
	\Lambda_\pm^\dagger \Lambda_\pm =  2 e^{-\eta^2 s/8} \left(
  \cosh\big(3 \eta \sinh(\varepsilon) \bP \big) \pm e^{-3\eta^2s/4} \cos\big(\eta\cosh(\varepsilon) \bQ\big) \right)
\]
with $s=\sinh(2\varepsilon)$.
Thus,  for any $\eta$ and $\varepsilon$, the operators
$$ \cosh\big(3 \eta \sinh(\varepsilon) \bP \big) \pm  e^{-3\eta^2s/4} \cos\big(\eta\cosh(\varepsilon) \bQ\big)
$$ are non-negative. This means that
$$
 e^{3\eta^2 s/4} \cosh\big(3 \eta \sinh(\varepsilon) \bP \big)\geq  \pm  \cos\big(\eta\cosh(\varepsilon) \bQ\big)
.
$$
We conclude by changing   $\varepsilon$ to $-\varepsilon$.
\end{proof}

\end{document}